\newtheorem{theorem}{Theorem}
\newtheorem{lemma}{Lemma}
\newtheorem{remark}{Remark}
\newtheorem{definition}{Definition}
\newcommand{\xor}{\oplus}
\newcommand{\Xor}{\bigoplus}
\newcommand{\der}{\ensuremath{D}\xspace}
\newcommand{\field}[1]{\mathbb{#1}}
\newcommand{\Z}{\field{Z}}
\newcommand{\R}{\field{R}}
\newcommand{\N}{\field{N}}
\newcommand{\F}{\field{F}}
\newcommand{\Hf}{\ensuremath{\mathcal{H}}\xspace}
\newcommand{\GF}[1]{\ensuremath{\field{GF}\left(#1\right)}}
\newcommand{\indc}[1]{$#1$-wise independence}
\newcommand{\indt}[1]{$#1$-wise independent}
\newcommand{\uindt}[2]{$\left(#1,#2\right)$-wise independent}
\newcommand{\suitable}[1]{$#1$-suitable}
\newcommand{\dermat}[2]{\ensuremath{\mathcal M_{#1}(#2)}\xspace}
\newcommand{\cur}[1]{\ensuremath{\mathscr C_{#1}}}
\newcommand{\arr}[1]{\ensuremath{\mathscr C(#1)}}
\newcommand{\term}[1]{\emph{#1}}
\begin{document}
\title{Independence of Tabulation-Based Hash Classes}

\author{Toryn Qwyllyn Klassen
\thanks{Supported in part by an NSERC USRA (in the summer of 2010) and a PURE award from the University of Calgary (in summer 2011).} 
\and Philipp Woelfel
\thanks{Supported in part by NSERC}
}

\maketitle

\begin{abstract}
A tabulation-based hash function maps a key into $d$ derived characters indexing random values in tables that are then combined with bitwise xor operations to give the hash. Thorup and Zhang \cite{4uni} presented $d$-wise independent tabulation-based hash classes that use linear maps over finite fields to map a key, considered as a vector $(a,b)$, to derived characters. We show that a variant where the derived characters are $a+b\cdot i$ for $i=0,\dots,q-1$ (using integer arithmetic) yielding $(2d-1)$-wise independence. Our analysis is based on an algebraic property that characterizes $k$-wise independence of tabulation-based hashing schemes, and combines this characterization with a geometric argument. We also prove a non-trivial lower bound on the number of derived characters necessary for $k$-wise independence with our and related hash classes.
\end{abstract}

\section{Introduction}
A family (multiset) $\Hf$ of hash functions $h:U\to R$ is called \indt{k} if a hash function $h\in\Hf$ selected uniformly at random maps any $k$ distinct keys from $U$ uniformly and independently to $R$.
Such classes of functions have found wide application in the literature.
For example, \indt{4} hash functions can be used for estimating the second moment of a data stream (see Thorup and Zhang \cite{4uni}).
Pagh, Pagh, and Ru\v{z}i\'{c} \cite{paghlinear} proved that insertions, queries, and deletions using hashing with linear probing will all run in expected constant time if the hash function used is \indt{5} (assuming the number $n$ of keys hashed is at most a constant fraction of the table size). 
Seidel and Aragon \cite{rst96} introduced the treap data structure and showed that if the priorities of a set of keys are 8-wise independent random variables, then various performance guarantees hold \cite[Theorem 3.3]{rst96}.

For most applications, \emph{approximate} \indc{k}, where the probability that $k$ keys map to $k$ given values may deviate by a small amount from the true random case, suffices.
The canonical approximate \indt{k} hash family (where the range is $[m]:=\{0,1,\dots,m-1\}$), is constructed by choosing a prime $p\geq m$ and taking as the family the set of all mappings $x\mapsto P(x)\bmod m$, where $P$ is a polynomial of degree $k-1$ over $\Z_p$.

Evaluating the polynomial is usually inefficient, especially for large values of $k$. 
Therefore, there has been interest in faster methods trading off arithmetic operations for a few lookups in tables filled with random values.

\subsection{Background on Tabulation-Based Hashing}
The idea behind tabulation-based hashing is splitting a key $x$ into characters (substrings) $x_0,\dots,x_{q-1}$ which are then hashed by $q$ hash functions into a common range.
The resulting values are then combined with exclusive or operations to yield the final hash value, $h(x)$.
(For simplicity we assume for now that the range of $h$ is the set of all bit strings of some fixed length.)
Such a scheme was proposed by Carter and Wegman \cite{Carter1979}.

\sloppypar
Since characters are shorter than the key, it can become feasible (and desirable) to tabulate the hash functions on the characters; that is, getting the hashed value of a character is just a lookup into a table position indexed by that character.
A function from such a hash class is therefore given by $h(x_0\dots x_{q-1})=\Xor_{i\in[q]} T_i(x_i)$, where $\xor$ denotes the bitwise exclusive or operation, $[q]:=\{0,\dots,{q-1}\}$, and $T_i$, for each $i\in [q]$, is a table filled with random values.
Selecting $h$ uniformly at random selects the tables randomly; it is known that this scheme is \indt{3} if the tables are filled with \indt{3} random values, but irrespective of the tables' contents, $4$-wise independence cannot be achieved.
(Obviously, for any four 2-character keys $(a,c),(a,d),(b,c),(b,d)$, the hash value of one key is uniquely determined by that of the three other keys.)

A modification to the scheme that can achieve $k$-wise independence for $k\geq 4$ is to \emph{derive} a small number of characters $D_0(x),D_1(x),\dots,D_{d-1}(x)$ from a key $x$, and use these derived characters in place of the components of the key for table lookups.
That is, a hash function would be given by $h(x)=\Xor_{i\in[d]} T_i(D_i(x))$, where $T_0,\dots,T_{d-1}$ are tables filled with random values.
Dietzfelbinger and Woelfel \cite{randomgraphs} suggested to use random hash functions $D_i$ (from a $c$-universal hash family) in order to derive characters.

Thorup and Zhang \cite{4uni} proposed an efficient deterministic way of computing derived characters:
Consider an arbitrary finite field $\F$.
Let $\vec x=(x_0,\dots,x_{q-1})$ be a vector over $\F$, and $G$ a $q\times d$ matrix (where $d>q$) such that every $q\times q$ submatrix has full rank over $\F$.
Then $\vec z = \vec x G$ forms a vector $\bigl(D_0(\vec x),\dots,D_{d-1}(\vec x)\bigr)$ of $d$ derived characters.
Thorup and Zhang proved that such a class is $k$-wise independent if $d=(k-1)(q-1)+1$.
In fact, it suffices that each of the tables $T_0,\dots,T_{k-1}$ is filled independently with $k$-wise independent values.
If the input characters are $c$-bit strings, then one can choose $\F=\GF{2^c}$ to obtain $c$-bit strings as derived characters.
Thorup and Zhang also gave a similar scheme for \indt{4} hash classes (which they later \cite{5uni} proved was also \indt{5}) that required only $2q-1$ derived characters for $q$ input characters (as a tradeoff, some of the derived characters were slightly larger than the input characters). 
A version of the \indt{5} scheme for $q=2$ is $h(ab)=T_0(a)\xor T_1(b)\xor T_2(a+b)$, where `+' is either regular integer addition or addition modulo a suitable prime.

\subsection{Contributions}
The efficiency of tabulation based hashing is heavily influenced by the number of table lookups, and thus by the number of derived characters.
Therefore, we study how many derived characters are needed to obtain $k$-wise independent tabulation based hash functions.

We suggest a variant of Thorup and Zhang's tabulation based hash families, called $(q,d)$-curve hash families.
For $q=2$, these hash families can achieve $(2d-1)$-wise independence using only $d$ derived characters.
Thus, for $k\geq 6$, only about half as many table lookups in random tables, albeit slightly larger ones, are needed as in Thorup and Zhang's construction in order to achieve guaranteed $k$-wise independence.
Another advantage of our construction is that these hash functions can be computed with simple integer arithmetic (the \mbox{$i$-th} derived character for a key $x=(a,b)$ is simply $a+i\cdot b$), as opposed to finite field arithmetic.
In fact, the only way Thorup and Zhang's scheme can achieve practical performance is by using multiplication tables for the finite field multiplications.
In our scheme, multiplication tables are not necessary.

Generally, a function from our $(q,d)$-curve hash family maps a key $a_0{\dots}a_{q-1}$ to the $y$-values attained by the polynomial curve $y=\sum_{i\in[q]} a_i z^i$ at ${z=0,\dots,d-1}$.
These $y$-values form the $d$ derived characters that can be used for lookups in the random tables $T_0,\dots,T_{d-1}$.
(As in other tabulation based hash classes, we only require that the tables $T_0,\dots,T_{d-1}$ are filled independently with $k$-wise independent random values.)
Using integer arithmetic allows for the question of whether a $(q,d)$-curve hash family is $k$-wise independent to be easily interpreted as a geometric problem regarding what intersections occur in an arrangement of curves.

When $q=2$, the polynomials are linear, and the problem is especially simple. We have the following result:
\begin{theorem}\label{upper}
A $(2,d)$-curve hash family is \indt{(2d-1)}.
\end{theorem}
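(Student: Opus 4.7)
The plan is to recast $(2d-1)$-wise independence as a polynomial divisibility problem over $\F_2$ and then apply a Newton polygon argument. The first step is to invoke the standard algebraic characterization of $k$-wise independence for tabulation-based hash families: the family is $k$-wise independent if and only if no non-empty set $K$ of at most $k$ distinct keys has the property that at every column $i \in \{0,\dots,d-1\}$, each derived character value occurs an even number of times among the members of $K$. This is essentially the statement that $\bigoplus_{x\in K}h(x)$ is uniformly distributed iff some entry $T_i(v)$ appears in its expansion with odd multiplicity. Hence, it suffices to show that any such non-empty ``bad'' $K$ has $|K|\geq 2d$.

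Next, I would encode $K$ by the nonzero Laurent polynomial
\[ F_K(y,z) \;=\; \sum_{(a,b)\in K} y^a z^b \;\in\; \F_2[y^{\pm 1}, z^{\pm 1}], \]
whose number of monomials is exactly $|K|$. Since $D_i(a,b)=a+bi$, one has $F_K(y, y^i) = \sum_{(a,b)\in K} y^{a+bi}$, which vanishes iff the derived character multiset at column $i$ has all multiplicities even. So $K$ is bad iff $F_K(y, y^i)=0$ for every $i\in\{0,\dots,d-1\}$. Working in the UFD $\F_2(y)[z]$ and then clearing denominators (valid because each factor $z-y^i=z+y^i$ is monic in $z$), this is equivalent to $F_K$ being divisible in $\F_2[y^{\pm 1}, z^{\pm 1}]$ by
\[ P_d(y,z) \;:=\; \prod_{i=0}^{d-1}(z+y^i). \]

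The geometric heart is a vertex count of Newton polygons. The $i$-th factor $z+y^i$ has Newton polygon equal to the segment from $(0,1)$ to $(i,0)$; these $d$ segments have $d$ distinct slopes, so their Minkowski sum $N_{P_d}$ is a planar zonotope with exactly $2d$ vertices. Writing $F_K = P_d\cdot Q$ with $Q\neq 0$, the Newton polygon of the product satisfies $N_{F_K} = N_{P_d} + N_Q$, and in the plane the number of vertices of a Minkowski sum is at least that of either summand (since edge directions can only accumulate), so $|V(N_{F_K})|\geq 2d$. Crucially, every vertex $v$ of $N_{P_d}+N_Q$ admits a unique decomposition $v=v_P+v_Q$ with $v_P$ a vertex of $N_{P_d}$ and $v_Q$ the extremal point of $N_Q$ in the corresponding direction, so the coefficient of the monomial at $v$ in $F_K=P_d\cdot Q$ is the product of the coefficients of $v_P$ in $P_d$ and $v_Q$ in $Q$, both equal to $1$ in $\F_2$. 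Hence $F_K$ has at least $2d$ nonzero monomials, so $|K|\geq 2d$, contradicting $|K|\leq 2d-1$.

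The main obstacle is the Newton polygon step, which rests on two facts: that $N_{P_d}$ is a zonotope with exactly $2d$ vertices (routine once one identifies it as a Minkowski sum of $d$ segments with pairwise distinct slopes), and that extremal vertex monomials in a product over $\F_2$ cannot cancel (which follows from the uniqueness of the extremal decomposition). Once these two facts are in place, everything else is a clean algebraic translation of the combinatorial ``all multiplicities even'' condition into divisibility by the explicit polynomial $P_d$, whose shape is tailor-made for the zonotope-vertex count.
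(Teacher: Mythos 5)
Your proof is correct, but the counting argument at its core is genuinely different from the paper's. Both proofs begin from the same algebraic characterization (Theorem~\ref{rankTok} in the paper): the class fails to be \indt{k} exactly when some non-empty set $K$ of at most $k$ keys is ``bad,'' i.e.\ in every column $i\in[d]$ each derived value occurs an even number of times among the keys of $K$; so it suffices to show every non-empty bad $K$ satisfies $|K|\ge 2d$. The paper proves this by an elementary sweep: taking the key curve of maximal slope, it observes that a line once strictly below it stays below, and that the evenness of the ``equal'' class at each column forces at least two more lines to drop strictly below between consecutive columns, so $|B_{d-1}|\ge 2d-2$ and hence $k\ge 2d-1$, which is ruled out by the parity case-split. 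You instead encode $K$ as $F_K(y,z)=\sum_{(a,b)\in K}y^az^b$ over \GF{2}, observe that badness of column $i$ is exactly $F_K(y,y^i)=0$, conclude that $\prod_{i\in[d]}(z+y^i)$ divides $F_K$ (valid since these are distinct monic-in-$z$ linear factors over $\GF{2}(y)$, and monicity lets you clear denominators), and then count at least $2d$ vertices on the Newton polygon of $F_K$, which contains the $2d$-gon zonotope of the factors as a Minkowski summand; the unique extremal decomposition of a vertex guarantees the corresponding monomial of the product does not cancel. All the facts you invoke are standard and correctly applied, so the argument is sound. Your route requires more machinery (Minkowski sums, Ostrowski-type non-cancellation) but avoids the parity case distinction and repackages the combinatorial condition as an explicit divisibility statement; the paper's sweep is shorter and entirely elementary. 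Neither argument visibly extends to $q>2$, where the paper resorts to a weaker intersection-counting bound.
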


We also establish a lower bound on the number of derived characters that are needed for $k$-wise independence with a $(2,d)$-curve hash family.

\begin{theorem}\label{lower}
No $(2,d)$-curve hash family on $U=[n]^2$ is \indt{\left(2^d\right)}, provided that $n\ge\max\{2^{d-1}(d-1)+2,3\}$.
\end{theorem}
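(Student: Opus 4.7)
My plan is to apply the algebraic characterization of $k$-wise independence from earlier in the paper: a tabulation hash family with derived characters $D_0,\dots,D_{d-1}$ fails to be $k$-wise independent as soon as one exhibits $k$ distinct keys $x^{(1)},\dots,x^{(k)}$ such that, for every $i\in\{0,\dots,d-1\}$, each value appears with even multiplicity in the multiset $\{D_i(x^{(j)})\}_{j=1}^{k}$ (because then $\bigoplus_j h(x^{(j)})=0$ identically). Since $D_i(a,b)=a+ib$, it is convenient to identify a key $(a,b)$ with the line $\ell_{(a,b)}(z)=a+bz$; the task becomes to exhibit $2^d$ distinct lines with integer coefficients in $[n]$ whose $y$-values at each of the $d$ abscissae $z=0,1,\dots,d-1$ pair up.

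I will index the $2^d$ lines by the vertices of the $d$-cube $\{0,1\}^d$ and realize the pairing at $z=i$ by the $i$-th coordinate direction of the cube: I seek lines $\ell_v(z)=a(v)+b(v)\,z$ with $\ell_v(i)=\ell_{v\oplus e_i}(i)$ for every $i$. Since each coordinate direction is a perfect matching of $\{0,1\}^d$, this automatically yields the required pairing at each abscissa. Writing the constraint as $\bigl(a(v)-a(v\oplus e_i)\bigr)+i\bigl(b(v)-b(v\oplus e_i)\bigr)=0$ and expanding $a,b$ as multilinear polynomials in $v_0,\dots,v_{d-1}$, applying the constraint at two distinct indices $i\ne j$ pins down $a_S=-i\,b_S=-j\,b_S$ for every $S\supseteq\{i,j\}$, forcing $a_S=b_S=0$ for $|S|\ge 2$; hence $a$ and $b$ are affine, with $a_{\{i\}}=-i\,b_{\{i\}}$, and in particular $a$ is independent of $v_0$. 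I thus parametrize $b(v)=\beta+\sum_{i=0}^{d-1}b_i v_i$ and $a(v)=\alpha-\sum_{i=1}^{d-1}i\,b_i v_i$ with integer design parameters.

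Distinctness of the $2^d$ resulting keys reduces to: no nonzero $\delta\in\{-1,0,1\}^d$ simultaneously satisfies $\sum_i b_i\delta_i=0$ and $\sum_i i\,b_i\delta_i=0$. Taking $b_0=1$ and $b_i=2^{i-1}$ for $i\ge 1$, the second equation alone already forces $\delta_{d-1}=0$ (the leading $i\,2^{i-1}$ term dominates the rest by a geometric-series bound), and a descending induction peels off $\delta_{d-2},\dots,\delta_1$, after which the first equation forces $\delta_0=0$. Setting $\beta=0$ and $\alpha=\sum_{i=1}^{d-1}i\,b_i=(d-2)\,2^{d-1}+1$ makes every coordinate nonnegative with largest value $\alpha\le(d-1)\,2^{d-1}+1$; hence the keys lie in $[n]^2$ whenever $n\ge(d-1)\,2^{d-1}+2$, while the separate clause $n\ge 3$ takes care of the small case $d=1$ (witnessed directly by $(0,0)$ and $(0,1)$).

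The main obstacle is the structural step that forces the affine form of $a$ and $b$ together with the rigid relation $a_{\{i\}}=-i\,b_{\{i\}}$; once that is in hand, the remaining task is a combinatorial balancing act, spreading the $b_i$ enough to forbid nontrivial $\pm 1$-cancellations (distinctness) while keeping $\sum_i i\,b_i$ small enough to meet the claimed bound on $n$.
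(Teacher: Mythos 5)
Your proof is correct, and it in fact produces essentially the same family of lines as the paper (for $d=3$ your parametrization with $b_0=b_1=1$, $b_2=2$ reproduces the paper's base-case arrangement exactly), but it gets there by a genuinely different route. The paper proves a general doubling lemma (Lemma~\ref{curvelower}): from a bad $(q,d,k)$-arrangement one obtains a bad $(q,d+q-1,2k)$-arrangement by adding a polynomial vanishing on the new columns and taking the union; Theorem~\ref{lower} then follows by induction on $d$ from explicit base cases $d=1,2,3$, with a translation argument ensuring the two copies are disjoint and controlling the size of the coordinates. You instead give a one-shot closed-form construction indexed by the cube, $\ell_v(z)=\alpha+b_0v_0z+\sum_{i\ge1}b_iv_i(z-i)$, verify badness column by column via the perfect matchings $v\mapsto v\oplus e_i$, and establish distinctness by a direct dominance argument (no nonzero $\pm1$-combination can kill both $\sum_i b_i\delta_i$ and $\sum_i i\,b_i\delta_i$ when $b_i=2^{i-1}$); I checked the arithmetic, including $\alpha=\sum_{i=1}^{d-1}i2^{i-1}=(d-2)2^{d-1}+1$, and it is sound. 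Your approach buys a uniform construction with no base-case enumeration, a transparent reason why every column is bad, and the sharper coordinate bound $2^{d-1}(d-2)+2$, which matches Lemma~\ref{lem:bad-arrange} and is slightly stronger than the $2^{d-1}(d-1)+2$ in the theorem statement; the paper's route buys a doubling lemma stated for general $q$, which also feeds into its discussion of $q>2$. Two minor remarks: the structural step forcing $a$ and $b$ to be affine with $a_{\{i\}}=-i\,b_{\{i\}}$ is a nice derivation of the ansatz but is not logically required (only the sufficiency of that form is used), and your appeal to the algebraic characterization correctly uses only the direction of Theorem~\ref{rankTok} that needs no suitability hypothesis, so the conclusion holds for arbitrary table contents, as the theorem requires.
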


We prove these theorems through geometric arguments.
At the heart of our analysis is an algebraic characterization of the tabulation-based hash classes which are \indt{k}.
This characterization is of independent interest.
For example, it can be used to establish that every \indt{2k} tabulation-based hash class (based on fully random tables) is also \indt{(2k+1)}.
This has the immediate consequence, that for odd $k>1$, in Thorup and Zhang's construction $(k-2)(q-1)+1$ derived characters are sufficient for $k$-wise independence, which is $q-1$ fewer characters than what their proof guarantees.
In addition, the algebraic characterization simplifies the analysis of tabulation-based hash classes.

For $q>2$, we have been able to achieve only a small reduction in the number of required derived characters compared to Thorup and Zhang's hash families: We show that in general, $(q,d)$-curve hash functions are $k$-wise independent if
$d\ge\left\lceil2\frac{q-1}{2q-1}(k-1)\right\rceil(q-1)+1$, whereas Thorup and Zhang's method requires $d\geq (k-1)(q-1)+1$.
For values of $k$ where we don't have to round up, the reduction in lookups needed is only by a small constant factor that decreases with $q$ (e.g., $4/5$ for $q=3$ or $6/7$ for $q=4$).
Although we do not expect this improvement to be relevant in practice (especially since using integer arithmetic means that larger lookup tables are needed), we believe that the theoretical result indicates that perhaps improved proof techniques can lead to further improvements in the future.

We have executed some initial experiments (see the appendix).
These show that $(2,d)$-curve hash families outperform (often by more than a factor of 2) the functions from Thorup and Zhang's $q=2$ class that are known to give at least the same degree of independence.
However, for $q=4$, Torup and Zhang's functions are still more efficient than our $(2,d)$-curve functions, presumably due to the fact that the random tables $T_i$ are smaller and exhibit a more cache friendly behaviour.
Nevertheless, our experiments indicate that reducing the number of table lookups can significantly increase the efficiency of tabulation based hashing, and that it is worthwhile trying to determine the exact independence of such hash classes.

\section{The Independence of Tabulation-Based Hash Classes}
The following definition of $k$-wise independent hash classes is standard:
\begin{definition}
A class $\Hf$ of hash functions of the form $h:U\to[m]$ is called \term{\indt{k}} (where $k\in \N$), if
$\Pr(\forall i \in [k]:h(x_i)=y_i)=\frac{1}{m^k}$
 for all distinct $x_0,\dots,x_{k-1}\in U$
 and all $y_0,\dots,y_{k-1}\in [m]$,
and $h$ selected uniformly at random from \Hf.
\end{definition}
Throughout this paper we will assume that $m=2^\ell$ for some $\ell\in\N$ (note that we are denoting the positive integers by $\N$; the non-negative integers will be called $\N_0$).
It is known and easy to see that every \indt{k} class is \indt{k'} if $1\le k'\le k$. A hash function is called \indt{k} if it is selected uniformly at random from a \indt{k} class.

Now we will detail tabulation-based hashing and the terminology involved. This description is based on the ``general framework'' described by Thorup and Zhang \cite{4uni,5uni} and is also influenced by some notation used by P{\v a}tra{\c s}cu and Thorup \cite{Patrascu}.

A \term{derivation function} $D$ maps a key $x$ to a sequence $\der(x):=(\der_i(x):i\in[d])=\{(i,\der_i(x)):i\in[d]\},$ where each $\der_i$ is some function. This sequence is called a \term{derived key}; the element $\der_i(x)$ is called the $i$-th \term{derived character}. A \term{tabulation-based hash function} $h:U\to [m]$, using derivation function \der, is given by
\begin{align*}
h(x)
=\Xor_{i\in [d]} T_i(\der_i(x))
\end{align*}
where each $T_i$, $i\in [d]$, is some hash function into $[m]$.
A tabulation-based hash class $\Hf$ is a multiset of tabulation-based hash functions where each member has (possibly) different functions $T_i$, $i\in[d]$. We will use the notation $\Hf_\der$ to denote the tabulation-based hash class whose members use derivation function $\der$.

The intention of a tabulation-based hash function is that the computation of each $T_i$ is just one table lookup.
The idea is that for each $i\in[d]$, $D_i(U)=\{D_i(x):x\in U\}$ is a subset of $[n_i]$ for some small $n_i$, so that those tables are small enough to fit in fast memory.

\begin{definition}
A tabulation-based hash class $\Hf$ is called \term{\suitable{k}} if each table $T_i$, $0\leq i<d$, is filled with $k$-wise independent random values, and the choices for $T_0,\dots,T_{d-1}$ are independent.
\end{definition}

Some previous considerations of what degree of independence are achieved by tabulation-based hash functions have made use of the following result:
\begin{lemma}[{\cite[Lemma 2.1]{4uni}, see also \cite[Lemma 2.6]{Siegel04}}]\label{peeling}
A \suitable{k} tabulation-based hash function is \indt{k} if for each set of keys $S'$ of size $k' \le k$, there exists an $i \in [d]$ and $x \in S'$ such that $D_i(x) \not= D_i(y)$ for any $y \in S'\setminus\{x\}$.
\end{lemma}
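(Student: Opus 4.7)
The plan is to prove, by induction on $k'$, the following statement: for every subset $S' \subseteq U$ with $|S'| = k' \le k$ and every target assignment $(t_x)_{x \in S'} \in [m]^{k'}$, one has $\Pr\!\left(\bigwedge_{x \in S'}\{h(x) = t_x\}\right) = m^{-k'}$. Specializing to $k' = k$ then gives the lemma. The base case $k' = 0$ is vacuous, and the inductive idea is to use the peeling hypothesis to peel off one key of $S'$ whose hash value is uniform and independent of the rest, contributing a clean factor of $1/m$ on top of the induction hypothesis applied to the remaining $k'-1$ keys.

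For the inductive step, fix $S'$ of size $k'$ together with targets $(t_x)$. The hypothesis of the lemma supplies an index $i \in [d]$ and a key $x \in S'$ such that $D_i(x) \ne D_i(y)$ for every $y \in S' \setminus \{x\}$. Write
\[ h(x) = T_i(D_i(x)) \xor \Xor_{j \in [d] \setminus \{i\}} T_j(D_j(x)), \]
and let $E$ denote the event $\bigwedge_{y \in S' \setminus \{x\}}\{h(y) = t_y\}$. The crucial observation is that $E$ is determined by the tables $T_j$ for $j \ne i$ together with the values $T_i(v)$ at positions $v \in V := \{D_i(y) : y \in S' \setminus \{x\}\}$; by construction $|V| \le k' - 1 \le k - 1$, and $D_i(x) \notin V$.

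Since the tables are jointly independent and $T_i$ is filled with $k$-wise independent values, the (at most $k$) random variables $\{T_i(v) : v \in V \cup \{D_i(x)\}\}$ are uniform and mutually independent on $[m]$. In particular $T_i(D_i(x))$ is uniform and independent of the $\sigma$-algebra determining $E$; since $h(x) \xor T_i(D_i(x))$ is measurable with respect to that $\sigma$-algebra, $h(x)$ is uniform on $[m]$ conditioned on $E$, yielding $\Pr(h(x) = t_x \mid E) = 1/m$. Applying the induction hypothesis to $S' \setminus \{x\}$ (which satisfies the peeling condition since the lemma's hypothesis quantifies over \emph{every} subset of size $\le k$) gives $\Pr(E) = 1/m^{k'-1}$, and multiplying yields $m^{-k'}$ as desired.

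The only real obstacle is the conditioning argument: one must verify that conditioning on $E$ does not leak information about $T_i(D_i(x))$. This reduces exactly to the definition of $k$-wise independence for $T_i$ applied to the at-most-$k$ positions $V \cup \{D_i(x)\}$, so no extra work is needed beyond identifying the correct set of positions. Everything else is clean bookkeeping of independent randomness supplied by the suitability assumption.
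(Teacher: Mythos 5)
Your proof is correct, but it follows a genuinely different route from the paper's. The paper does not prove this lemma directly: it cites it and then, after establishing Theorem~\ref{rankTok}, observes that the peeling condition implies the hypothesis of that theorem --- the distinguished pair $(i,x)$ gives a column of $\dermat{\der}{S'}$ containing exactly one 1, so no nonempty subset of rows of $\dermat{\der}{S}$ can sum to zero over \GF{2}, i.e.\ the derivation incidence matrix has full row rank, and the lemma drops out as a corollary of the algebraic characterization. You instead give the classical self-contained induction on $|S'|$: peel off the key $x$ whose $i$-th derived character is unique, condition on the event $E$ fixing the hashes of the remaining keys, and use the $k$-wise independence of $T_i$ on the at-most-$k$ positions $V\cup\{D_i(x)\}$ together with the mutual independence of the tables to conclude that $T_i(D_i(x))$, and hence $h(x)$, is uniform given $E$. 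Your argument is more elementary and matches the original Thorup--Zhang/Siegel proofs; the careful identification of the $\sigma$-algebra determining $E$ and the verification that $|V\cup\{D_i(x)\}|\le k$ are exactly the points that need checking, and you handle both. What the paper's detour buys is strictly more: Theorem~\ref{rankTok} is an if-and-only-if characterization (full row rank of $\dermat{\der}{S}$ over \GF{2}), which is needed elsewhere for the lower bounds and for Lemma~\ref{oddeven}, whereas the peeling condition is only sufficient --- it cannot detect, for example, cancellations involving more than one shared derived character per column.
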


We will generalize this result to get a characterization of which \suitable{k} tabulation-based hash functions are \indt{k}. First, we make one more definition.

\begin{definition}For any derivation function $\der$ and set $S=\{x_0,\dots,x_{k-1}\}$ of $k$ keys, the \term{derivation incidence matrix} \dermat{\der}{S}
is a $(0,1)$-matrix having a column corresponding to each element in $\bigcup_{j\in[k]}\der(x_j)$, and having 1 in row $j$ and column $(i,a)$  if and only if $\der_i(x_j)=a$.
\end{definition}
To illustrate, if $S=\{x,y,z\}$ and $x$ has derived key $\{(0,4),(1,5),(2,6)\}=(4,5,6)$ and similarly $\der(y)= (4,7,8),\ \der(z) = (5,7,9)$, then $\dermat{\der}{\{x,y,z\}}$ is
{\small
\[\bordermatrix{
    & (0,4) & (0,5) & (1,5) & (1,7) & (2,6) & (2,8) & (2,9) \cr
    x & 1 & 0 & 1 & 0 & 1 & 0 & 0 \cr
    y & 1 & 0 & 0 & 1 & 0 & 1 & 0 \cr
    z & 0 & 1 & 0 & 1 & 0 & 0 & 1 \cr
}\]}
The derivation incidence matrix is unique up to reordering the rows and columns.

We show that $k$ keys from a set $S$ are mapped uniformly and independently by a random hash function from $\Hf_\der$, if and only if the rows in \dermat{\der}{S} corresponding to those $k$ keys are linearly independent.
The idea of the following theorem is very similar to a proof by Dietzfelbinger and Rink \cite[Proposition 1]{splittingtrick} where a hash function was shown to be fully random.
\begin{theorem}\label{rankTok}
  Let $\Hf_\der$ be a \suitable{k} tabulation-based hash class.
  Then $\Hf_\der$ is \indt{k} if and only if for every set $S$ of $k$ keys, \dermat{\der}{S} has full row rank over \GF{2}.
\end{theorem}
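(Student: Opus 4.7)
The plan is to write the tuple $\bigl(h(x_0),\dots,h(x_{k-1})\bigr)^T$ as a linear image of the relevant table entries, and then read the theorem off the rank of the corresponding matrix. Concretely, let $S=\{x_0,\dots,x_{k-1}\}$, set $M:=\dermat{\der}{S}$, and let $t$ be the column vector indexed by the columns $(i,a)$ of $M$ with entries $t_{(i,a)}:=T_i(a)$. Since $m=2^\ell$, identify $[m]$ with $\GF{2}^\ell$, so that bitwise \textsc{xor} is coordinatewise addition in $\GF{2}$. With this identification, the definition $h(x_j)=\Xor_{i\in[d]}T_i(\der_i(x_j))$ is exactly the $j$-th row of the matrix-vector product $Mt$ (each 1 in row $j$ of $M$ picks up one table entry contributing to $h(x_j)$). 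Thus $\bigl(h(x_0),\dots,h(x_{k-1})\bigr)^T=Mt$.

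The first key step is to argue that, under $k$-suitability, the entries of $t$ are jointly independent and uniform on $\GF{2}^\ell$. For each fixed $i\in[d]$, the columns of $M$ of the form $(i,a)$ correspond to distinct values $\der_i(x_j)$, so there are at most $k$ of them. Since $T_i$ is filled with $k$-wise independent values, these at-most-$k$ entries are fully jointly independent; and because the $T_0,\dots,T_{d-1}$ are themselves independent, the entire vector $t$ is uniform on $(\GF{2}^\ell)^N$ where $N$ is the number of columns of $M$.

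The second step is the linear-algebra punchline. For the ``if'' direction, if $M$ has full row rank $k$ over \GF{2}, then the map $t\mapsto Mt$, viewed coordinate-of-$\GF{2}^\ell$ by coordinate-of-$\GF{2}^\ell$, is surjective $(\GF{2}^\ell)^N\to(\GF{2}^\ell)^k$; pushing the uniform distribution forward through a surjective linear map yields the uniform distribution, so $\Hf_\der$ is \indt{k} on $S$ (and hence on every $k'\le k$-subset by standard monotonicity). For the ``only if'' direction, suppose $M$ has row rank $<k$. Pick a nonzero $\alpha\in\GF{2}^k$ with $\alpha^TM=0$. Then $\Xor_{j:\alpha_j=1}h(x_j)=\alpha^T(Mt)=0$ deterministically, so the joint distribution of the $h(x_j)$ is supported on a proper affine subspace of $(\GF{2}^\ell)^k$ and cannot be uniform.

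The only subtle point is the independence claim in the second paragraph: one needs the observation that only $k$ entries per table are ever touched by $S$, which is exactly what allows $k$-wise independence inside each table to upgrade to full joint independence of the coordinates of $t$ that actually appear. Once that is in hand, the rest is a one-line surjectivity/kernel argument over \GF{2}.
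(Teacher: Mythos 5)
Your proposal is correct and follows essentially the same route as the paper: both write $\bigl(h(x_0),\dots,h(x_{k-1})\bigr)^T$ as $M$ times the vector of relevant table entries, use the observation that at most $k$ entries per table are touched (so $k$-suitability makes that vector uniform), and then conclude via surjectivity/preimage counting in one direction and a deterministic linear relation among the $h(x_j)$ in the other. The only cosmetic difference is that you work coordinatewise over $\GF{2}^\ell$ while the paper views $[m]$ as $\GF{2^\ell}$ and invokes rank--nullity; these are interchangeable.
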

\begin{proof}
Let $S=\{x_0,\dots,x_{k-1}\}$ be an arbitrary set of keys, and let $w$ be the number of columns in $M=\dermat{\der}{S}$.

First, suppose that $M$ has full row rank over \GF{2} (and so over \GF{2^\ell}=\GF{m}).
Pick $h$ from a \suitable{k} tabulation-based hash class $\Hf_\der$ uniformly at random, and define a vector $V=(v_0,\dots,v_{w-1})^T\in[m]^w$ such that if the $i$-th column of $M$ is labelled $(j,a)$, then $v_i=T_j(a)$. Then the matrix multiplication $M\cdot V$ results in a $k\times 1$ column vector whose $i$-th entry is $h(x_i)$.

Note that since there are $k$ keys, there are at most $k$ entries in $V$ that are based on the values in the table $T_j$ for each $j\in[d]$. Since the tables represent independently chosen \indt{k} functions, this means that $V$ is distributed uniformly in $[m]^w$.

We consider the probability that $M\cdot V=Y$, for a fixed but arbitrary $Y\in[m]^k$. Since $M$ has full row rank there is at least one solution $V_0\in[m]^w$ to that equation, and the set of all solutions is $\{V_0+Z:Z\in \operatorname{Ker}\left( M\right)\}$ and so has dimension $w-k$ by the rank-nullity theorem.

Therefore, since $V$ is selected uniformly at random from $[m]^w$, $$\Pr(M\cdot V=Y)=\frac{m^{w-k}}{m^w}=\frac{1}{m^k}.$$
So $\Hf_\der$ is \indt{k}.

Now suppose that \dermat{\der}{S} does not have full row rank.
Then there is a row, w.l.o.g. row $k-1$, which is a linear combination of the others. Let $\Psi$ be the set of the indices of the rows whose sum is equal to the last row.
It follows that $\Pr(h(x_i)=y_i, \forall i\in [k])=0$
for any $h\in\Hf_\der$, $h(x_{k-1})=\Xor_{i\in\Psi} h(x_i)$. So for any $y_0,\dots,y_{k-2}$ and $y_{k-1}\ne \Xor_{i\in\Psi}y_i$,
Therefore, $\Hf_\der$ is not \indt{k}.
\end{proof}

We can see that \autoref{rankTok} is a generalization of \autoref{peeling}. Suppose the condition for \autoref{peeling}, that for each set $S'$ of $k'$ keys there are $i\in [d], x\in S'$ such that $D_i(x)\not= D_i(y)$ for $y\in S'\setminus\{x\}$. Hence for each such $S'$ with the corresponding appropriate $i$ and $x$, in \dermat{\der}{S'} the column labelled $(i,D_i(x))$ contains exactly one 1 (and so all rows do not sum to zero). Therefore, there is no subset of the rows of \dermat{\der}{S} that can sum to zero, i.e. \dermat{\der}{S} has full row rank.

\subsection{Weaker independence}
Various properties of hash families weaker than \indc{k} have been considered, such as the requirement that for all distinct $x_0,\dots,x_{k-1}\in U$ and all $y_0,\dots,y_{k-1}\in [m]$,
and $h$ selected uniformly at random from \Hf,
\begin{align*}
\Pr(h(x_i)=y_i,\ \forall i\in[k]) \le \frac{c}{m^k}.
\end{align*}
We will call such a class \uindt{k}{c} (after \cite{Siegel04}). It may be worth noting that some of the lemmas we have just given can be adapted to describe \uindt{k}{c} families.

If we modify the conditions of \autoref{rankTok} to only require that the tables of a hash function $h\in\Hf_\der$ selected uniformly at random are independent \uindt{k}{c} functions, then the probability of the vector $V$ defined in the proof of \autoref{rankTok} assuming a particular value is bounded above by $\frac{c^d}{m^k}$. Therefore $\Hf_\der$ is \uindt{k}{c^d}.

If \dermat{\der}{S} does not have full row rank, then as in the latter part of the proof for \autoref{rankTok} let us suppose that the row corresponding to $x_{k-1}$ is equal to the sum of the rows corresponding to the keys indexed by $\Psi$. Then for any $h\in\Hf_\der$ at least one of the $m^{k-1}$ possible $(k-1)$-tuples $y_0,\dots,y_{k-2}$ must be equal to $(h(x_0),\dots,h(x_{k-2}))$ with probability at least $\frac{1}{m^{k-1}}$, so fixing such a tuple we get
\begin{align*}
\Pr\left(h(x_i)=y_i,\ \forall i\in[k-1] \wedge h(x_{k-1})=\Xor_{i\in\Psi} h(x_i)\right) \ge \frac{1}{m^{k-1}}=\frac{m}{m^k}
\end{align*}and hence $\Hf_\der$ can not be \uindt{k}{c} for $c<m$.

\subsection{Even Degrees of Independence Imply Odd Degrees}
There is an interesting result that follows from the characterization of \indt{k} hash classes showing that achieving \indc{k} for odd $k$ does not require more derived characters than achieving $(k-1)$-wise independence.

\begin{lemma}\label{oddeven}
If $\Hf_\der$ is a \indt{2k} tabulation-based hash class and is \suitable{(2k+1)}, then $\Hf_\der$ is \indt{(2k+1)}.
\end{lemma}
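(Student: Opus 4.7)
The plan is to derive a contradiction via the algebraic characterization in Theorem~\ref{rankTok} and a simple parity count on the derivation incidence matrix.

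First I would assume, toward contradiction, that $\Hf_\der$ is not \indt{(2k+1)}. By Theorem~\ref{rankTok}, there must then exist a set $S=\{x_0,\dots,x_{2k}\}$ of $2k+1$ keys such that the rows of $M := \dermat{\der}{S}$ are linearly dependent over $\GF{2}$. Pick a non-empty subset $T\subseteq S$ of rows that sums to the zero row in $M$.

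Next I would show that $T$ must in fact be all of $S$. Restrict $M$ to the rows indexed by $T$; the resulting submatrix agrees with $\dermat{\der}{T}$ except for some all-zero columns corresponding to derived characters $(i,a)$ used only by keys in $S\setminus T$. Deleting these zero columns preserves row-sums, so the rows of $\dermat{\der}{T}$ also sum to zero. Hence $\dermat{\der}{T}$ lacks full row rank. If $|T|\le 2k$, then by Theorem~\ref{rankTok} $\Hf_\der$ would fail to be $|T|$-wise independent, contradicting the assumed \indc{2k} (and the standard fact noted after the definition that \indc{2k} implies $|T|$-wise independence). Therefore $|T|=2k+1$, i.e.\ every row of $M$ participates in the dependency.

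The finisher is a parity argument. That all rows of $M$ sum to zero over $\GF{2}$ means every column of $M$ has an even number of $1$'s. Now fix any single index $i\in[d]$ and look only at the columns of $M$ of the form $(i,\cdot)$. For each row $x_j$, the derivation function assigns exactly one derived character $\der_i(x_j)$, so each row contributes exactly one $1$ among the columns $(i,\cdot)$. The total number of $1$'s in those columns is therefore $|S|=2k+1$, which is odd; but this total is also the sum over those columns of the (even) column weights, hence even. This contradiction proves that no such $S$ exists, so $\Hf_\der$ is \indt{(2k+1)}.

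I do not anticipate a hard step: the only point requiring a little care is the passage from a dependency among rows of $\dermat{\der}{S}$ to a dependency among rows of $\dermat{\der}{T}$ for a proper subset $T$, to justify invoking \indc{2k}. Everything else is the parity observation that a $(0,1)$-matrix with an odd number of rows whose row sums are all $1$ on the columns $(i,\cdot)$ cannot have every column of even weight.
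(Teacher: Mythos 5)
Your proposal is correct and follows essentially the same route as the paper's own proof: use Theorem~\ref{rankTok} to get a zero-sum subset of rows, argue via \indc{2k} that the subset must be all $2k+1$ rows, and then derive a parity contradiction by counting the ones in the columns associated with a single derived-character position (the paper uses $i=0$; you note any $i$ works). Your write-up just spells out in more detail the step reducing the dependency to $\dermat{\der}{T}$, which the paper leaves implicit.
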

\begin{proof} 
Suppose for contradiction that $\Hf_\der$ is not \indt{(2k+1)} even though it is \suitable{(2k+1)}. 
Then by Theorem~\ref{rankTok} there exists a set of $2k+1$ keys $S$ such that \dermat{\der}{S} does not have full row rank over \GF{2}; that is, some linear combination of the rows sum to zero. This is equivalent to saying that there is a subset $\Psi$ of the rows that sum to zero.

Since $\Hf_\der$ is \indt{2k}, $|\Psi| > 2k$. Therefore, for $\Hf_\der$ not to be \indt{(2k+1)}, the sum of all rows of \dermat{\der}{S} must be zero.

Let $M$ be the submatrix of $\dermat{\der}{S}$ containing only those columns corresponding to the first derived characters (i.e., those columns labeled with $(0,a)$ for some $a$). The entries in each column of $M$ must sum to zero (modulo 2), so the number of ones in each of these columns, and therefore in $M$, must be even. However, there are exactly $2k+1$ ones (a single one for each key) distributed among these columns, so this is impossible.
\end{proof}

This generalizes and simplifies the proof of the result by P{\v a}tra{\c s}cu and Thorup \cite{Patrascu} that any \indt{4} tabulation-based hash class in which all input characters are used as derived characters is \indt{5}. 
Thorup and Zhang's tabulation-based construction from \cite{4uni} achieves $k$-wise independence for $d=(k-1)(q-1)+1$.
Lemma~\ref{oddeven} implies that their construction is \indt{k} for $d=(k-2)(q-1)+1$ if $k>1$ is odd.

\section{$(q,d)$-curve Hash Families}

We will consider a set of hash classes that are variants of the \indt{k} scheme of Thorup and Zhang \cite{4uni}.
\begin{definition}For each $q\in\{2,3,4,\dots\},d\in \N$, a $(q,d)$-curve family of hash functions is a tabulation-based hash family with derivation function $\der$ given by
\begin{align*}
\der_i(a_0a_1\dots a_{q-1})=\sum_{r\in[q]}a_r\cdot i^r
\end{align*}
for $i\in [d]$.
\end{definition}

According to this definition, each key $a=a_0,\dots,a_{q-1}$ determines a polynomial curve in the plane.
The $j$-the derived character of $a$, $D_j(a)$, is then the $y$-value of the curve at $x$-coordinate $j$.
This motivates the following definitions, which are intended to aid geometric reasoning.
\begin{definition}
For any key $a=a_0\dots a_{q-1}$, the corresponding \term{key curve} $\cur{a}:\R\to\R$ is defined by $\cur{a}(z)=\sum_{i\in [q]} a_i\cdot z^i$.
Given a set $S$ of keys, we will use the notation \arr{S} to denote the set $\{\cur{a}:a\in S\}$.
\end{definition}

\begin{definition}
A \term{column} is a set $\{(c,y):y\in \Z\}=\{c\}\times\Z$ for some $c\in\Z$.
\end{definition}
\begin{definition}
A \term{bad column} relative to a set of keys $S$ is a column $\Psi$ such that each point in $\Psi$ is intersected by an even (possibly zero) number of elements of $\arr{S}$. That is, for all $(c,y)\in\Psi$ the cardinality of $\{a\in S: \cur{a}(c)=y\}$ is even.
\end{definition}
\begin{definition}
A \term{bad $(q,d,k)$-arrangement} over a set $U$ is a set of $k$ key curves (corresponding to keys in $S\subseteq U$) derived using a $(q,d)$-curve hash family and having $d$ consecutive bad columns $\{0\}\times\Z$, $\{1\}\times\Z$,\dots,$\{d-1\}\times\Z$.
\end{definition}

For an arbitrary set $S$ of $k'$ keys, $\arr{S}$ is a bad $(q,d,k')$-arrangement if and only if there is an even number of ones in each column of \dermat{\der}{S} (i.e., \dermat{\der}{S} does not have full row rank).
Hence, by Theorem~\ref{rankTok}, a \suitable{k} $(q,d)$-curve family of functions mapping from a universe $U$ is \indt{k} if and only if for every $k'\in\{1,\dots, k\}$, there is no bad $(q,d,k')$-arrangement over $U$. 

Let $k_{\max}(q,d)$ denote the largest integer $k$ such that for any $k'\leq k$ there is no bad $(q,d,k')$-arrangement over $(\N_0)^q$.
By the discussion above, for $k=k_{\max}(q,d)$, a $(q,d)$-curve family is \indt{k} if it is \suitable{k}, but cannot be \indt{(k+1)} if the universe it acts on is large enough to include the set of keys corresponding to a bad $(q,d,k+1)$-arrangement.
In the following we determine upper and lower bounds on $k_{\max}(2,d)$.

\subsection{A Lower Bound for $k_{\max}(2,d)$}
Based on the geometric observations we have just made, to prove \autoref{upper} it suffices to prove the following lemma.

\begin{lemma}
  For any $d\in\N$, and $k\in\{1,\dots,2d-1\}$, there does not exist a bad $(2,d,k)$-arrangement over $(\N_0)^2$, i.e., $k_{\max}(q,d)\geq 2d-1$.
\end{lemma}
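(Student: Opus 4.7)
The plan is to translate the bad-arrangement condition into an algebraic statement over $\F_2$ and then apply a Newton polytope argument. I would encode the candidate set $S$ of $k$ lines (keys in $(\N_0)^2$) as the polynomial
$$
f_S(u, v) \;=\; \sum_{(a, b) \in S} u^a v^b \;\in\; \F_2[u, v],
$$
which has exactly $k$ monomials. Since $f_S(u, u^c) = \sum_{(a,b) \in S} u^{a+bc}$ and the coefficient of $u^y$ there is the mod-$2$ count of lines passing through $(c, y)$, Theorem~\ref{rankTok} and the definition of a bad arrangement tell us that $S$ is bad iff $f_S(u, u^c) = 0$ in $\F_2[u]$ for each $c \in \{0, \dots, d-1\}$. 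By Euclidean division in $\F_2[u][v]$, this is equivalent to $f_S$ being divisible by
$$
P_d(u, v) \;:=\; \prod_{c=0}^{d-1} (v + u^c).
$$
So for nonempty $S$, $f_S = g \cdot P_d$ with $g \neq 0$, and it suffices to show that every nonzero multiple of $P_d$ has at least $2d$ monomials.

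To establish this, I would work with Newton polytopes in $\R^2$. Each factor $v + u^c$ has Newton polytope equal to the segment from $(0, 1)$ to $(c, 0)$, and these $d$ segments have pairwise distinct directions (slopes $-1/c$ for $c \geq 1$, and vertical for $c = 0$). Their Minkowski sum, which equals $N(P_d)$, is therefore a zonogon with exactly $2d$ vertices. Because we are over a field, no cancellation occurs at vertices of a product: each vertex of $N(g) + N(P_d)$ is uniquely expressed as a sum of a vertex of $N(g)$ and a vertex of $N(P_d)$, so its coefficient in $f_S = g \cdot P_d$ is the product of two nonzero field elements, hence nonzero. Consequently $N(f_S) = N(g) + N(P_d)$.

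Combined with the standard fact that $|V(A + B)| \geq |V(A)|$ for Minkowski sums of convex polygons (every edge direction of $A$ survives in $A + B$, so $A + B$ has at least as many edges as $A$), this yields
$$
k \;=\; \#\mathrm{mon}(f_S) \;\geq\; |V(N(f_S))| \;\geq\; |V(N(P_d))| \;=\; 2d,
$$
contradicting the assumption $k \leq 2d-1$. The main obstacle I anticipate is verifying carefully that $N(P_d)$ really is a $2d$-gon --- i.e., that the $2d$ extremal vertices of the Minkowski sum of the $d$ distinct-direction segments remain distinct as lattice points in $\Z^2$ and all carry nonzero coefficient in $P_d$. Once this zonogon fact is in hand, the translation from bad columns to divisibility by $P_d$ and the Minkowski sum inequality are routine.
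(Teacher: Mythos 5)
Your argument is correct, but it takes a genuinely different route from the paper's. The paper proves the lemma by an elementary sweep over the columns: it fixes a line $\cur{a_0b_0}$ of maximal slope and partitions the remaining curves at each column $c$ into the sets $A_c,E_c,B_c$ of curves above, equal to, and below that line; badness forces each part to have even cardinality, maximality of the slope makes $B_c$ monotone, and since some second line must sit in $E_c$ and then fall into $B_{c+1}$, one gets $|B_{c+1}|\ge|B_c|+2$, hence $|B_{d-1}|\ge 2d-2$ and $k\ge 2d$ after the easy parity observation that $k$ must be even. You instead encode $S$ as $f_S(u,v)=\sum_{(a,b)\in S}u^av^b$ over $\GF{2}$, note that badness of column $c$ is exactly the vanishing $f_S(u,u^c)=0$, i.e.\ divisibility by $v+u^c$, and hence (these being pairwise non-associate irreducibles in the UFD $\GF{2}[u,v]$) divisibility by $\prod_{c\in[d]}(v+u^c)$, whose Newton polytope is a zonogon with $2d$ vertices; multiplicativity of Newton polytopes and the monotonicity of vertex counts under Minkowski sums then give $k\ge 2d$. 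The steps you flag as delicate do go through: the $d$ segments have pairwise distinct directions and positive length, so the $2d$ extreme points are distinct, and each vertex of a Minkowski sum of Newton polytopes decomposes uniquely, so its coefficient is a product of nonzero field elements. The paper's proof is more self-contained and its above/equal/below bookkeeping is reused essentially verbatim in the $q>2$ analysis later in the paper; your proof buys a sharper structural statement, namely that the bad $(2,d,k)$-arrangements are exactly the supports of nonzero multiples of $\prod_{c\in[d]}(v+u^c)$, which also meshes naturally with the inductive doubling construction of Lemma~\ref{lem:bad-arrange} (adding $Q$ with zeros at the new columns amounts to multiplying by $1+u^{\alpha}v^{-\beta}$, a unit multiple of a power of a new linear factor). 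On the other hand, your method does not extend as directly to $q>2$, where badness of a column corresponds to vanishing on a parametrized curve (a non-principal ideal) rather than to divisibility by a single linear factor.
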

\begin{proof}
Let $S=\{a_0b_0,\dots, a_{k-1}b_{k-1}\}$ be an arbitrary set of $k\in\{1,\dots,2d-1\}$ keys in $(\N_0)^2$.
If $k$ is odd then the statement is trivial, since a bad column is one that is intersected at every point by an even number of curves, which is not possible when the total number of curves intersecting the column is odd.

Now let $k$ be even and w.l.o.g. say that $b_0 \ge b_i$ for all $i \in[k]$.
Suppose for contradiction that $\arr{S}$ is a bad $(2,d,k)$-arrangement.

For each $c\in[d]$, $\arr{S}$ can be partitioned into $\{A_c,E_c,B_c\}$, the sets of curves that are respectively, above, equal to, and below $\cur{a_0b_0}$ in column $c$.
More precisely,
$
A_c:=\{f\in\arr{S}: f(c)>a_0+b_0c\}$,
$E_c:=\{f\in\arr{S}: f(c)=a_0+b_0c\}$, and
$B_c:=\{f\in\arr{S}: f(c)<a_0+b_0c\}$.

Since $\arr{S}$ is a bad arrangement, 
each of these subsets must have an even cardinality. Note that for each $c\in[d-1]$, if $f\in B_c$, then $f\in B_{c+1}$, for since $\cur{a_0b_0}$ is a line with a slope ($b_0$) at least as great as any other, no line that is below $\cur{a_0b_0}$ at some $c$ can ever rise above later. So $|B_{c+1}|\ge |B_c|$ for all $c\in[d-1]$.

For each $c\in [d-1]$, $|E_c|$ is even so there exists $f\in E_c\setminus \{\cur{a_0b_0}\}$. Since lines can only intersect once and $\cur{a_0b_0}$ has slope at least as great as any other line, $f\in B_{c+1}$. So $|B_{c+1}|> |B_c|$. Since the cardinality of $B_{c+1}$ is also even, that means that $|B_{c+1}|\ge |B_c|+2$, and since $|B_0|\ge 0$, $|B_{d-1}|\ge 2(d-1)=2d-2$. For this to be possible, we must require $k>2d-2$. But then $k$ can only be $2d-1$ which is odd.
\end{proof}

\subsection{An Upper Bound for $k_{\max}(2,d)$}

We now consider the problem of determining how many derived characters are needed for \indc{k} with a $(q,d)$-curve hash family. The following result shows that, under certain conditions, if we want to double the degree of independence we get from such a class, then the required value of $d$ must increase by at least $q-1$.

\begin{lemma}\label{curvelower}
If there is a bad $(q,d,k)$-arrangement over $\Z$, then there is a bad $(q,d+q-1,2k)$-arrangement over $\Z$.
\end{lemma}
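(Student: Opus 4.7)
The plan is to build the larger arrangement from the given one by perturbing each curve along a polynomial that vanishes on exactly the $q-1$ new columns. Concretely, given a bad $(q,d,k)$-arrangement $\{f_0,\ldots,f_{k-1}\}$ over $\Z$, I would set
\[
\phi(z) = C\prod_{i=0}^{q-2}(z-d-i),
\]
a nonzero integer-coefficient polynomial of degree $q-1$ whose roots are exactly $z=d,d+1,\ldots,d+q-2$, and then take as the new arrangement
\[
S' = \{f_0,\ldots,f_{k-1}\}\cup\{f_0+\phi,\ldots,f_{k-1}+\phi\}.
\]
Each $f_j+\phi$ has degree $\le q-1$ over $\Z$, so corresponds to a valid key in $\Z^q$, i.e., to a key curve of the $(q,d+q-1)$-curve hash family.

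Badness of $S'$ would then be checked column by column in two cases. On an old column $c\in\{0,\ldots,d-1\}$, the multiset $\{f_j(c)\}_j$ is bad by hypothesis, and $\{(f_j+\phi)(c)\}_j=\{f_j(c)+\phi(c)\}_j$ is just a uniform translate of it, so it too is bad; hence their union has every value occurring an even number of times. On a new column $c\in\{d,\ldots,d+q-2\}$, the defining property $\phi(c)=0$ forces $(f_j+\phi)(c)=f_j(c)$, so the combined multiset is exactly $\{f_j(c)\}_j$ counted with multiplicity two and is trivially bad.

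The remaining check is that $|S'|=2k$. The $f_j$'s are distinct by hypothesis, as are the $f_j+\phi$'s. A cross-collision $f_j+\phi=f_{j'}$ would require $\phi=f_{j'}-f_j$; but $\phi$ has leading coefficient $C$, while $f_{j'}-f_j$ has leading coefficient bounded by the original arrangement, so choosing $|C|$ large enough rules out all such coincidences. I do not foresee a significant obstacle: the entire argument is driven by the observation that extending by $q-1$ columns is exactly what allows a nonzero degree-$(q-1)$ perturbation polynomial to vanish on every new column while remaining a valid key curve of the $(q,d+q-1)$-curve family.
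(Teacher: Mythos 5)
Your proposal is correct and follows essentially the same approach as the paper: both add a degree-$(q-1)$ polynomial vanishing exactly on the $q-1$ new columns to every curve of the given arrangement, take the union with the original, and verify badness column by column. Your explicit leading-coefficient argument for ensuring the translated family is disjoint from the original is a welcome detail that the paper leaves implicit.
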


\begin{proof}
Suppose that we have a bad $(q,d,k)$-arrangement $C=\arr{S}$. We will use this in constructing a bad $(q,d+q-1,2k)$-arrangement.

Let $Q(z)$ be a polynomial of degree $q-1$ having zeros at each $z\in\{d+i:i\in[q-1]\}$ and such that
$
C':=\{P(z)+Q(z):P(z)\in C\}
$ 
is disjoint from $C$. Let us say that $C'=\arr{S'}$ for some set of keys $S'$.

Note that $C'$ is, like $C$, bad on columns $0,\dots, d-1$ since for any $P_1,P_2\in C$, 
$P_1(z)=P_2(z)$ if and only if $P_1(z)+Q(z)=P_2(z)+Q(z)$.
Then let $C''=C\cup C'=\arr{S\cup S'}$. Thus $C''$ has $2k$ curves, and as we will show, is a bad $(q,d+q-1,2k)$-arrangement.

Consider column $z$, for arbitrary $z\in[d]$. Since both $C$ and $C'$ are bad on column $z$, any point $(z,y)$ is on an even number $c$ of curves from $C$ and an even number $c'$ of curves from $C'$. 
Therefore the number of curves from $C''$ that pass through $(z,y)$ is $c+c'$ which is also even.

Columns $d$ through $d+q-2$ are also bad, since for any $z\in \{d,\dots, d+q-2\}$, $P(z)=P(z)+Q(z)$ for all $P\in C$. Every point in columns $d$ through $d+q-2$ is on an equal number of curves from $C$ and $C'$.
\end{proof}

To apply \autoref{curvelower} to describe the behavior of a hash family we need to know that the bad $(q,d+q-1,2k)$-arrangement constructed in the proof actually corresponds to keys in the universe of the hash functions in that class. The following result, in showing sufficient conditions on the universe size in the case of $q=2$, implies \autoref{lower}.

\begin{lemma}\label{lem:bad-arrange}
For each $d\in\N$ there is a set $S$ of $k=2^d$ keys in $U=[n]\times[n]$ such that $\arr{S}$ is a bad $(2,d,2^d)$-arrangement, provided that $n\ge \max\left\{2^{d-1}(d-2)+2,\ 3\right\}$.
In particular, $k_{\max}(2,d)<2^d$.
\end{lemma}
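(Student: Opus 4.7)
The plan is induction on $d$, repeatedly applying \autoref{curvelower} (with $q=2$) to double the set of keys, and translating after each application to keep the keys in $(\N_0)^2$. The base case $d=1$ is handled by $S_1=\{(0,0),(0,1)\}$: the lines $y=0$ and $y=z$ agree only at $(0,0)$ in column~$0$, giving a bad $(2,1,2)$-arrangement inside $[2]^2\subseteq[3]^2$.

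For the inductive step I would take $\alpha=2^{d-1}$ in \autoref{curvelower}, then translate every key by $+2^{d-1}d$ in the $a$-coordinate, which is a uniform vertical shift of all curves and therefore preserves badness. I would carry along the invariant that in $S_d$ the unique key with $b=0$ is $(M_d,0)$ where $M_d=(d-2)2^{d-1}+1$ for $d\ge 2$ (and $M_1=0$), the unique key with $b=2^{d-1}$ is $(0,2^{d-1})$, and every key lies in $[M_d+1]\times[2^{d-1}+1]$. After the application, the shifted originals are $(a+2^{d-1}d,b)$ and the shifted modifieds are $(a,b+2^{d-1})$, so the maximum $a$-coordinate rises to $M_d+2^{d-1}d=(d-1)2^d+1=M_{d+1}$, the maximum $b$-coordinate rises to $2^d$, and the extremal keys update to $(M_{d+1},0)$ and $(0,2^d)$, maintaining the invariant.

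The main obstacle is verifying the disjointness hypothesis of \autoref{curvelower}, namely that the shifted copy $S_d'=\{(a-2^{d-1}d,\,b+2^{d-1}):(a,b)\in S_d\}$ is disjoint from $S_d$. A collision would demand two keys $(a_i,b_i),(a_j,b_j)\in S_d$ with $b_j-b_i=2^{d-1}$ and $a_i-a_j=2^{d-1}d$. Because every $b$-value lies in $[0,2^{d-1}]$, the first equation forces $b_i=0$ and $b_j=2^{d-1}$; by the invariant the pair must be $(M_d,0)$ and $(0,2^{d-1})$, and a collision then requires $M_d=2^{d-1}d$, which is false since $M_d=(d-2)2^{d-1}+1<d\cdot 2^{d-1}$. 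With disjointness in hand the induction closes, $S_d\subseteq(\N_0)^2$ furnishes the claimed bad arrangement (so $k_{\max}(2,d)<2^d$), and the coordinate bound $(d-2)2^{d-1}+2$ matches the lemma (with the separate case $n\ge 3$ accommodating $d\in\{1,2\}$, where $2^{d-1}+1$ is the binding constraint instead of $M_d+1$).
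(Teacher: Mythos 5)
Your proof is correct and follows essentially the same route as the paper: induction via \autoref{curvelower} with the linear polynomial $Q(z)=\pm 2^{d-1}(z-d)$ followed by a normalizing shift, which produces the very same key sets (e.g., for $d=3$ your construction recovers the paper's explicit arrangement $\{(5,0),(5,1),(4,1),(4,2),(1,2),(1,3),(0,3),(0,4)\}$). The only cosmetic differences are that the paper starts from explicit base cases $d=1,2,3$ and obtains disjointness of $C$ and $C'$ in one line from the separation of the first-coordinate ranges ($a\le 2^{d-1}(d-2)+1$ versus $a'\ge 2^{d-1}d$), whereas you shift the other coordinate and argue disjointness through the second coordinates by tracking an extremal-key invariant.
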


\begin{proof}\def\x{z}
We will construct, for each $d\ge 3$, a bad $(2,d,2^d)$-arrangement where each input character is an element of $\left[2^{d-1}(d-2)+2\right]$ (for the special cases of $d \in\{1,2\}$ we will have arrangements where we require the input characters to be in $[3]$). From this it follows that $(k=2^d)$-independence cannot be achieved with $d$ derived characters, assuming that the universe is large enough to have characters of the appropriate size.

The proof is by induction on $d$. The base cases are $d=1,2,3$.
The arrangements $\{0+0{\x}, 0+1{\x}\}$ and $\{0+1{\x},0+2{\x},1+0{\x}, 1+1{\x}\}$ of $2^1$ and $2^2$ lines are bad on $d=1$ and $d=2$ columns respectively, and use input characters from $[3]$ only.
For $d=3$, the set of $2^3=8$ key curves $\{0+3{\x},0+4{\x},1+2{\x},1+3{\x},4+1{\x},4+2{\x},5+0{\x},5+1{\x}\}$ is bad in three columns and so is a bad $(2,3,2^3)$-arrangement.
Note that all input characters are in $[2^{d-1}(d-2)+2]=[2^2+2]=[6]$.

Suppose that there is a bad $(2,d,2^d)$-arrangement $C$, for some $d\ge 3$, derived from keys where all the input characters are in $[2^{d-1}(d-2)+2]$. We now apply the construction in \autoref{curvelower} with $Q({\x})=2^{d-1}(d-{\x})$ to get a bad $(2,d+1,2^{d+1})$-arrangement where the input characters are in $[2^d(d-1)+2]$.

First, define $C':=\{(a+b{\x})+2^{d-1}(d-{\x}):(a+b{\x})\in C\}$.
We want $C\cap C'=\emptyset$: for each $a+b{\x}\in C,\ a\le 2^{d-1}(d-2)+1$ whereas for each $a'+b'{\x}\in C',\  a'\ge 2^{d-1}d>2^{d-1}(d-2)+1$.

Therefore, $C''=C\cup C'$ is a set of $2^{d+1}$ key curves, derived from keys where the first input characters are in $[(2^{d-1}(d-2)+2)+2^{d-1}d]=[2^{d}(d-1)+2]$ and the second input characters are in $\{-2^{d-1},\dots,2^{d-1}(d-2)+1\}$.By increasing the value of all second input characters by $2^{d-1}$ we ensure that they are all in $[2^{d-1}(d-1)+2]\subset[2^{d}(d-1)+2]$. This clearly does not change which columns are bad.
\end{proof}

\begin{remark}If we consider a hash class $\Hf$ which is a variant of $(q,d)$-curve hash family in that the derived characters are reduced modulo $r$ for any natural number $r$ (not necessarily a prime), then any set $S$ of keys for which $\arr{S}$ would be a bad $(q,d,k)$-arrangement will not be hashed with \indc{k} by $h$ selected uniformly at random from $\Hf$.

Therefore, any lower bound on the number of derived characters needed for \indc{k} with a $(q,d)$-curve hash family would also apply to $\Hf$.
\end{remark}

\subsection{A Lower Bound On $k_{\max}(q,d)$ for $q>2$}
We now consider how many derived characters are sufficient for \indc{k} for $(q,d)$-curve hash families where $q>2$.

\begin{lemma}If $C$ is a bad $(q,q,k)$-arrangement, there must be at least $k/2$ intersections between two curves on each column, and further to these there must be at least $\lceil k/4 \rceil$ additional intersections that are each either between columns or else on a column other than the last one.
\end{lemma}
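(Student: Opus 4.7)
The first claim is immediate from the bad-arrangement definition. At each column $c\in[q]$, the $k$ curves in $\arr{S}$ partition by their $y$-value at $z=c$ into equivalence classes of even size; a class of size $2m_j$ contributes $\binom{2m_j}{2}\ge m_j$ pair-intersections on column $c$, so the total is at least $\sum_j m_j=k/2$.

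For the second claim I compare the matching structures at the last column and at some other column, say column $0$. Within the partition at column $q-1$, pair up the curves inside each equivalence class to obtain a set $M_{q-1}$ of at least $\lceil k/2\rceil$ pair-intersections, and do the same at column $0$ to form $M_0$. Examine the multigraph $G:=M_0\cup M_{q-1}$ on vertex set $\arr{S}$; every vertex has degree at most $2$.

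If $M_0\cap M_{q-1}=\emptyset$, then $G$ decomposes into disjoint alternating even cycles of length $\ge4$, numbering at most $k/4$ in total. I would then show that in each cycle of length $2m$, the chain of alternating agreements at columns $0$ and $q-1$ forces at least $m-1$ further pair-intersections among the cycle's curves, each lying either between columns or on a column other than $q-1$. Summing over cycles then gives $\sum_i(m_i-1)=k/2-\text{\#cycles}\ge k/2-k/4=k/4$. If instead $M_0\cap M_{q-1}\ne\emptyset$, every shared edge corresponds to a pair of distinct curves whose polynomial difference has roots at both $0$ and $q-1$; such shared pairs already contribute pair-intersections at column $0$ beyond what is counted at column $q-1$, and combining these directly with a reduced cycle argument on the symmetric difference $M_0\triangle M_{q-1}$ still yields the $\lceil k/4\rceil$ bound.

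The main obstacle is the per-cycle claim. For $q=2$, a $4$-cycle $f_1,f_2,f_3,f_4$ admits only the two chord pairs $(f_1,f_3)$ and $(f_2,f_4)$, and the alternating constraints at columns $0$ and $1$ induce a linear relation on the slopes of the chord differences that rules out both chord pairs being parallel simultaneously, so at least one chord meets off column $1$. For $q>2$ I expect to generalize by studying the leading coefficients of the consecutive differences $f_{i+1}-f_i$ along the cycle: since these telescope to zero around the cycle, a nontrivial linear dependence among the chord differences is forced, compelling some chord difference to have a root away from column $q-1$. Extending to cycles of length $2m>4$, either by induction on $m$ or by a rank/dimension count in the span of the consecutive differences, should produce the required $m-1$ additional pair-intersections per cycle.
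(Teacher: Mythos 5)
Your route differs structurally from the paper's: you build perfect matchings at the two extreme columns, decompose their union into alternating cycles, and try to extract the extra intersections cycle by cycle via an algebraic relation among chord differences. The paper instead fixes only the column-$0$ matching and, for each matched pair $(f_i,g_i)$, tracks the parity of the sets of curves above, equal to, and below $f_i$ at \emph{every} column; since each of these sets must have even size in a bad arrangement, either some column $z<q-1$ has $|E_z|>2$ (yielding an extra coincidence there), or $g_i$ separates from $f_i$ at some column, and the parity of the ``below'' set then forces a third curve to cross $f_i$ or $g_i$ strictly between two columns. Each pair gets one associated extra intersection, and each intersection serves at most two pairs, giving $\lceil k/4\rceil$.

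There are two genuine gaps in your plan. First, the per-cycle claim ($m-1$ extra intersections from a $2m$-cycle) is the entire content of the lemma, and you only establish it for $q=2$ and $4$-cycles; for $q>2$ and longer cycles the leading-coefficient/telescoping idea is not developed into an argument, and it is doubtful it can be, because your framework records only the coincidences at columns $0$ and $q-1$, whereas the conclusion genuinely needs the badness of the \emph{intermediate} columns (which is exactly what the paper's $A_z,E_z,B_z$ bookkeeping exploits). Second, your treatment of shared edges is wrong as stated: a pair $\{f,g\}$ with $f(0)=g(0)$ and $f(q-1)=g(q-1)$ contributes intersections that are (or may be chosen as) among the $k/2$ already counted \emph{on} columns $0$ and $q-1$, so it supplies nothing ``additional'' in the sense of the lemma --- and for $q\ge 3$ such a pair need not meet anywhere else in $[0,q-1)$. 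A component consisting of a doubled edge therefore breaks your count of $k/2-\#\text{components}$, and rescuing it again requires invoking the interior columns. (A smaller point, which the paper also treats lightly: when your ``additional'' intersection lands on an interior column, you must also argue it is not one of the $k/2$ intersections already charged to that column.)
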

\begin{proof}\def\x{z}
Let $C$ be a bad $(q,q,k)$-arrangement, and choose a partition of $C$ into pairs $(f_i,g_i),i\in[k/2 ]$ such that $f_i(0)=g_i(0)$ (it is possible to choose such intersecting pairs since the arrangement is bad). Furthermore, since every column is bad, we can choose $k/2$ intersections for each column such that each curve is in exactly one of those intersections.

We will associate with each pair $(f_i,g_i)$, $i\in[k/2 ]$, an additional intersection (involving one or both of its members)
that occurs before the last column. So we will have $k/2$ such associations, and therefore at least $\lceil(k/2)/2\rceil=\lceil k/4 \rceil$ additional intersections (since an intersection of two curves may involve curves from at most two associations).

Fix an arbitrary pair $(f_i,g_i)$.
For each ${\x}\in[d]$, let $A_{\x},E_{\x},$ and $B_{\x}$ be the subsets of curves that are respectively greater than, equal to, and less than $f_i$ in column ${\x}$.
Since the arrangement is bad, each of these sets has even cardinality for each ${\x}$ (and since $f_i\in E_{\x}$ for each ${\x}$, $|E_{\x}|\ge 2$). If $|E_{\x}| >2$ for some ${\x}\in[q-1]$, then there is some $\alpha \in E_{\x}\setminus\{f_i\}$ that we can pick to associate the intersection of $\alpha$ and $f_i$ at ${\x}$ with the pair $(f_i,g_i)$.

Otherwise, $|E_{\x}|=2$ for all ${\x}\in[q-1]$. By construction, $E_0=\{f_i,g_i\}$. Since $f_i,g_i$ are polynomials of degree $q-1$ and so can only intersect at most $q-1$ times, there must be some ${\x}\in[q-1]$ such that $g_i\in E_{{\x}}$ but $g\notin E_{{\x}+1}$; w.l.o.g. assume $g_i\in B_{{\x}+1}$. Then since $|B_{{\x}+1}|$ is even, it must be that either there exists some $\alpha\in A_{\x}$ that is in $B_{{\x}+1}$ or there exists some $\beta\in B_{{\x}}$ that is in $E_{{\x}+1}\cup A_{{\x}+1}$. 
If there is such an $\alpha$, then it drops below $f_i$  somewhere between columns ${\x}$ and ${\x}+1$, which means that there is an intersection between columns that can be associated with $(f_i,g_i)$.
Similarly, if there is such a $\beta$, it would intersect $g_i$ somewhere between columns ${\x}$ and ${\x}+1$.
\end{proof}

\begin{theorem}
A \suitable{k} $(q,d)$-curve family is \indt{k} if $$d\ge\left\lceil2\frac{q-1}{2q-1}(k-1)\right\rceil(q-1)+1.$$
\end{theorem}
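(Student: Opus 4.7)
By \autoref{rankTok} and the geometric characterization developed after it, a \suitable{k} $(q,d)$-curve family is \indt{k} iff no bad $(q,d,k')$-arrangement exists over $U$ for any $k'\le k$. Odd $k'$ is ruled out by parity, so one may restrict attention to even $k'$; and since the purported bound on $d$ is monotone in $k$, it suffices to derive a contradiction from the existence of any bad $(q,d,k')$-arrangement $C$ with even $k'\le k$ under the hypothesis $d\ge \lceil 2(q-1)(k-1)/(2q-1)\rceil(q-1)+1$.

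Set $N:=\lceil 2(q-1)(k-1)/(2q-1)\rceil$, so that $d\ge N(q-1)+1$. The idea is to overlay $N$ overlapping blocks of $q$ consecutive columns, namely $B_j:=\{j(q-1),\ldots,(j+1)(q-1)\}$ for $0\le j<N$, where adjacent blocks share exactly one column. The restriction of $C$ to the columns of each $B_j$ is a bad $(q,q,k')$-arrangement, so the preceding lemma supplies at least $qk'/2$ pair-intersections on the columns of $B_j$ plus at least $\lceil k'/4\rceil$ ``additional'' pair-intersections whose $z$-coordinates lie in the half-open interval $[j(q-1),(j+1)(q-1))$. These half-open intervals are pairwise disjoint, so additionals across different blocks cannot collide; each of the $d$ bad columns also contributes at least $k'/2$ distinct pair-intersections, and with the bookkeeping explained below the additionals are moreover disjoint from the column pair-intersections. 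Consequently the total number of pair-intersections $T$ satisfies $T\ge dk'/2+N\lceil k'/4\rceil$. On the other hand, since any two degree-$(q-1)$ key curves meet in at most $q-1$ points, $T\le(q-1)\binom{k'}{2}$. Substituting $d\ge N(q-1)+1$ and rearranging gives $N(2q-1)\le 2(q-1)(k'-1)-2$, whereas the choice of $N$ together with $k'\le k$ forces $N(2q-1)\ge 2(q-1)(k-1)\ge 2(q-1)(k'-1)$, which is the required contradiction.

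The main obstacle is establishing that the additional intersections of different blocks are genuinely distinct from the column pair-intersections used in the count. The concern arises only at the columns $c=j(q-1)$ shared by $B_{j-1}$ and $B_j$: if the additional of $B_j$ lies on such a $c$, it must come from ``case (a)'' of the lemma's proof, which forces $|E_c|\ge 4$ (since $|E_c|$ is even and strictly exceeds $2$), and one is then free to choose the third curve $\alpha$ so that the resulting pair differs from the matching partner of $f_i$ at $c$; the ``between columns'' branch trivially sits off every column. By committing in advance to use $B_j$'s local matching as the global matching at each shared column $j(q-1)$, every additional of $B_j$ is automatically distinct from the $k'/2$ column pair-intersections attributed to $c$, and the $z$-range disjointness handles distinctness across blocks. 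Once this coordination of matchings is fixed, the arithmetic in the previous paragraph goes through without double counting and the theorem follows.
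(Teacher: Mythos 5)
Your proof follows essentially the same route as the paper's: cover the first $d-1$ columns by $N=\lceil 2(q-1)(k-1)/(2q-1)\rceil$ windows of $q$ consecutive columns, apply the preceding lemma to each window, and compare the resulting count of pair-intersections ($k'/2$ per column plus $\lceil k'/4\rceil$ additional per window) against the upper bound $(q-1)\binom{k'}{2}$; the arithmetic is identical. Your explicit justification that the ``additional'' intersections are not double-counted against the per-column matchings (by coordinating the matchings at the shared columns) addresses a point the paper leaves implicit, and is a refinement rather than a different approach.
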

\begin{proof}
Let $d=\left\lceil2\frac{q-1}{2q-1}(k-1)\right\rceil(q-1)+1$, and suppose for contradiction that there is a bad $(q,d,k')$-arrangement \arr{S} for some $k'\le k$.

Partition the first $d-1$ columns into consecutive groups of $q-1$ columns.
Each column has at least $k'/2$ intersections on it, and each group has a further $\left\lceil\frac{k'}{4}\right\rceil$
intersections either on the columns, between them, or between the last column of the group and the next column (this follows from taking a copy of \arr{S} transposed so that the first column of the group becomes column 0, and applying the last lemma to that arrangement).

Therefore, the total number of intersections in \arr{S} is at least
\begin{align*}
\left\lceil2\frac{q-1}{2q-1}(k'-1)\right\rceil\left((q-1)\frac{k'}{2}+\left\lceil\frac{k'}{4}\right\rceil\right)+\frac{k'}{2}> \frac{(q-1)(k')(k'-1)}{2}=(q-1)\binom{k'}{2}
\end{align*}
which is impossible since there can be at most $(q-1)\binom{k'}{2}$ intersections between members of a set of $k'$ polynomials of degree $\le q-1$.
\end{proof}

\bibliographystyle{plain}
\bibliography{ref}

\appendix
\newpage
\noindent{\LARGE\bfseries Appendix}
\section{Experiments}
We created implementations in C of $(2,d)$-curve hash classes (mapping 32 bit keys to 32 bit hashes) for several small values of $d$, using tables filled with pseudorandom numbers. We compared these against against our implementations of both the 2- and 4-character versions of Thorup and Zhang's scheme.

In implementing each version of Thorup and Zhang's scheme, we choose as the matrix to use in deriving characters a Vandermonde matrix with a first row consisting of 1's and a first column (excluding its intersection with the first row) consisting of 0's (such a matrix was one of those suggested in \cite{4uni}). The first derived character was therefore in every case the first input character.
To perform the finite field multiplications required by their scheme, we used precomputed multiplication tables. The 4-character implementation makes use of the optimization of having parallel additions of 8-bit values stored as segments of larger integers, but the 2-character one does not as it did not seem as though that gave an advantage.

For each hash class, the following test was made thirty times: A member of the class was chosen (i.e. the tables were filled with (pseduo)random values), an array of $1000000$ (pseudo)random keys was generated, and the array was iterated through 10 times, with each entry being hashed each time by the chosen function. The mean time (in nanoseconds) for each application of the function was calculated by averaging the average time for each of the thirty tests. The sample standard deviation between tests was also recorded.

The following table shows these timing results. For integer $d$, curve2\_$d$ is a function from a $(2,d)$-curve family; tz2\_$d$ is an implementation of Thorup and Zhang's $q=2$ scheme with $d$ derived characters, and tz4\_$d$ is an implementation of their $q=4$ scheme with $d$ derived characters. For comparison, we also tested the identity function id.

Three computers were used for testing, ``crunch'', ``laptop'', and ``desktop''.

\begin{figure}[h!tbp]\footnotesize
\begin{tabular}{|r|l|cc|cc|cc|}
\hline
\multirow{2}{*}{guaranteed $k$} & \multirow{2}{*}{function} & \multicolumn{2}{|c|}{time on crunch} & \multicolumn{2}{|c|}{time on laptop} & \multicolumn{2}{|c|}{time on desktop} \\
& & mean (ns) & SD & mean (ns) & SD & mean (ns) & SD\\\hline
0 & 
id
 & 2.08 & 0.0029
 & 2.28 & 0.6609
 & 4.21 & 0.0027
\\
\hline
7 & 
curve2\_4
 & 13.67 & 0.4724
 & 109.93 & 0.6524
 & 72.93 & 8.7217
\\
7 & 
tz2\_6
 & 19.73 & 0.1768
 & 233.40 & 57.0687
 & 104.85 & 4.9674
\\
7 & 
tz4\_16
 & 22.71 & 0.0367
 & 44.15 & 0.7651
 & 56.62 & 0.0888
\\
\hline
9 & 
curve2\_5
 & 29.53 & 0.6552
 & 155.91 & 0.8912
 & 112.49 & 0.7536
\\
9 & 
tz2\_8
 & 37.19 & 2.0565
 & 328.88 & 69.3904
 & 184.06 & 1.9557
\\
9 & 
tz4\_22
 & 33.81 & 0.0464
 & 60.47 & 0.9717
 & 81.80 & 0.4511
\\
\hline
11 & 
curve2\_6
 & 49.37 & 0.3285
 & 195.65 & 1.8137
 & 154.21 & 0.8712
\\
11 & 
tz2\_10
 & 76.91 & 3.9032
 & 432.71 & 64.0526
 & 290.92 & 1.7706
\\
11 & 
tz4\_28
 & 45.38 & 0.0588
 & 73.47 & 0.7742
 & 103.63 & 1.8915
\\
\hline
13 & 
curve2\_7
 & 72.96 & 0.2663
 & 239.33 & 2.8124
 & 202.28 & 1.1247
\\
13 & 
tz2\_12
 & 132.85 & 2.7763
 & 557.84 & 84.8628
 & 398.16 & 1.4015
\\
13 & 
tz4\_34
 & 58.51 & 0.0703
 & 106.74 & 37.9402
 & 120.98 & 0.2935
\\
\hline
15 & 
curve2\_8
 & 96.75 & 0.3063
 & 291.42 & 5.8155
 & 257.54 & 1.0047
\\
15 & 
tz2\_14
 & 187.16 & 2.0494
 & 676.39 & 154.8639
 & 492.78 & 6.6196
\\
15 & 
tz4\_40
 & 69.07 & 0.0772
 & 149.44 & 64.3039
 & 151.02 & 0.1966
\\
\hline
17 & 
curve2\_9
 & 120.73 & 0.2728
 & 339.61 & 4.9303
 & 310.56 & 0.9222
\\
17 & 
tz2\_16
 & 241.09 & 0.9494
 & 870.07 & 251.6200
 & 575.41 & 1.4096
\\
17 & 
tz4\_46
 & 83.55 & 0.1112
 & 192.05 & 61.9977
 & 170.63 & 2.4714
\\
\hline
19 & 
curve2\_10
 & 143.09 & 0.4698
 & 427.99 & 55.5387
 & 365.78 & 1.2003
\\
19 & 
tz2\_18
 & 297.71 & 0.6359
 & 1050.25 & 346.5042
 & 671.39 & 2.0122
\\
19 & 
tz4\_52
 & 97.54 & 0.0715
 & 218.52 & 69.6953
 & 171.57 & 3.2114
\\
\hline
\end{tabular}
\caption{Timing results with sample standard deviations for hash functions that have been proven to be \indt{k} for small values of $k$.}
\end{figure}

Specifications of the computers are as follows: ``crunch'' had two 3 GHz Intel Xeon processors with 4 MB cache each, ``laptop'' had an 1.9 GHz AMD Athlon TK-57 processor with 512 ($2\times 256$) KB cache, and ``desktop'' had a a 3 Ghz Pentium 4 processor with 2 MB cache. Note that the code was compiled for each machine separately, and as a 32-bit executable on ``desktop'' and as 64-bit executables on the others.

We see that in all tests $(2,d)$-curve functions outperformed (often by more than a factor of 2) functions from Thorup and Zhang's $q=2$ class that are known to give as much independence.

However, the tz4\_$d$ functions were often the best (always so on laptop and desktop). On crunch the T
tz4\_$d$ functions did not seem to be sped up as much as the others overall.
A possible reason is that the tz4\_$d$ functions do not benefit so much from a larger cache as the other functions because the derived characters used by tz4\_$d$ are only 8 bits long (so the lookup tables are small).

\end{document}